\documentclass[runningheads]{PCIC/llncs}
\usepackage[T1]{fontenc}
\usepackage{amsmath}
\usepackage{amssymb}
\usepackage{amsfonts}
\usepackage{siunitx}
\usepackage{xcolor}
\usepackage{mathtools}
\usepackage{graphicx}
\newtheorem{model}{Model}
\begin{document}
\title{Optimizing Experimental Design for Causal Effect Estimation with Partial Measurements}
\titlerunning{Optimizing Experimental Design}
%
\author{Leopold Mareis\inst{1,2}\orcidID{0000-0002-2624-6522}}
\authorrunning{L. Mareis}
\institute{Fraunhofer Institute for Cognitive Systems IKS, Hansastraße 32, 80686 Munich, Germany \and
TUM School of Computation, Information and Technology, Technical University of Munich, Germany}
\maketitle 
\begin{abstract} 
Instrumental variable regression quantifies causal effects between a possibly confounded treatment variable $ X_2 $ and a response variable $ X_3 $ by leveraging an instrument $ X_1 $. Our work considers the setting where some prior information of the joint distribution of $ X_{123} $ is given, potentially through an initial dataset. However, further samples must be gathered to improve the accuracy of the estimation. We show that under specific parameter configurations in a Gaussian graphical model, taking partial samples from, e.g., $ X_{12} $ can reduce the asymptotic variance of a consistent estimator. This idea is developed by adding a budget constraint over the cost per (partial) sample. The optimization problem is analytically solvable over the real numbers and gives the optimal number of requested partial and complete samples. We provide significance level, power, and sample-size calculations for detecting a non-zero causal effect under optimal budget allocation. 
Our method can considerably reduce the necessary budget and the number of complete samples. Finally, we showcase the advantages and applicability of adaptive causal effect estimation for automotive analytics and pharmaceutical research.

\keywords{Causal Effect Estimation  \and Experimental Design \and Graphical Model.}
\end{abstract}

\section{Introduction}
The problem of forming decisions under uncertainty is central to the success of humans and entities.  If done correctly, the subsequent action of a decision should have a certain effect on the outcome of interest. A common statistical tool to represent and specify the relationships between random variables is a directed acyclic graph (DAG).  It encodes whether direct effects between variables exist and provides a powerful machinery when combined with density estimates.  Estimating causal effects between two variables is essential for understanding the random system, and these finite-sample estimates are uncertain.  In this paper, we study the problem of gathering additional data to increase the accuracy of causal effect estimates.  Inspired by applications, we allow for requesting non-complete, and thus cheaper, observations.  Depending on the domain, cheaper can also mean faster or more ethical.  Our approach leverages asymptotic variances in a Gaussian Graphical Model \cite{yuan2007model} to form decisions on subsequent data requests.  Although this approach can be applied to arbitrary DAGs, we focus on the well-studied instrumental variable (IV) approach \cite{kilbertus2020class,cawley2012medical,staiger1994instrumental}.  If the requirements are satisfied, this setup allows for causal effects estimation without the necessity of randomization.  To make informed decisions, our theory provides information on the interplay between available budget, p-values, and power computations.  This is especially important in safety-critical domains where data sampling is restricted and needs to be approved and assessed.  Most approaches for adaptive data sampling focus on the non-predetermined number of samples and stop the data gathering procedure when pre-specified thresholds are met \cite{murphy2005experimental,pallmann2018adaptive}.  In non-full-information estimation, active learning methods want to uncover the label of samples leading to a high information gain \cite{cohn1996active,saar2004active}.  And finally, \cite{henckel2022graphical} selected statically covariates in valid adjustment sets based on asymptotic variances. 

The paper is structured as follows.  Section \ref{section_more_iv_data} studies the instrumental variable models where more partial instrument-treatment data is helpful.  Section \ref{section_gathering_budget} provides concrete optimized plans for requesting data under a limited budget. This theory is extended in Section \ref{section_power_pval} to detect positive effects via hypothesis testing before two real applications follow in Section \ref{section_applications}.

\section{Setup}
\begin{model} \label{model_instrumental_sem}
    We consider the causal model given by the structural equation model
\begin{equation*} 
    \begin{pmatrix}
    X_1 \\  X_2 \\ X_3 
    \end{pmatrix} =
    \begin{pmatrix}
    0 & 0 & 0 & \\ 
    \lambda_{12} & 0 & 0   \\
    0 & \lambda_{23} & 0  \\
    \end{pmatrix}
    \begin{pmatrix}
    X_1 \\  X_2 \\ X_3 
    \end{pmatrix}
    + 
    \varepsilon
    , \;
    \varepsilon
    \sim 
    N\left( \begin{pmatrix}
    0 \\ 0 \\ 0
    \end{pmatrix}, 
    \begin{pmatrix}
    \omega_{11} & 0 & 0 \\
    0 & \omega_{22} & \omega_{23} \\
    0 & \omega_{23} & \omega_{33} \\
    \end{pmatrix}\right),
\end{equation*}
with $ \lambda_{12} \neq 0, \lambda_{23}, \omega_{23} \in \mathbb{R}$ and $\omega_{11}, \omega_{22}, \omega_{33} > 0$ satisfying $ \omega_{22} \omega_{33} >~\omega_{23}^2 $. 
\end{model}
The variable $ X_1 $ represents the instrument, the variable $ X_2 $ represents the treatment, and the variable $ X_3 $ represents the outcome.  Instead of explicitly modeling a confounding variable, we represent the hidden relationship between $ X_2 $ and $ X_3 $ by $ \omega_{23} $. This is equivalent under the joint linearity assumption, and if $ \omega_{23} \neq 0 $, confounding is present.  Model \ref{model_instrumental_sem} can also be written in matrix notation as $ X = \Lambda^\top X + \varepsilon$ where $ \varepsilon \sim N(0, \Omega)$.  As $ X $ follows a multivariate normal distribution, its variance is given by 
\begin{equation} \label{formula_iv_variance}
      \mathrm{Var}(X) =
    \begin{pmatrix}
    \omega_{11} & \omega_{11}\lambda_{12} & \omega_{11} \lambda_{12}\lambda_{23} \\
    \cdot & \omega_{22} + \omega_{11}\lambda_{12}^2 & \omega_{23} + \lambda_{23} \sigma_{22} \\
    \cdot & \cdot & \omega_{33} + 2\omega_{23}\lambda_{23} + \lambda_{23}^2\sigma_{22}
    \end{pmatrix} \eqqcolon \Sigma.
\end{equation}
Under this setup, the parameter of interest giving the causal effect of $ X_2 $ on $ X_3 $ is $ \lambda_{23} $.  If there exists a connection between the instrument and the treatment, i.e. $ \lambda_{12} \neq 0 $, then \( \hat{\lambda}_{23} = \frac{\hat{\sigma}_{13}}{\hat{\sigma}_{12}} \) is an unbiased estimator \cite{drton2018algebraic}.  This estimator has two properties.  First, it is solely based on entries of the estimated covariance matrix $ \hat{\Sigma} $, and the numerator and denominator can be estimated on distinct or shared datasets.  Secondly, it is unstable when the denominator is close to $ 0 $, so if $ |\omega_{11}\lambda_{12}| $ is small -- in this case we say $ X_1 $ is a \textit{weak instrument}.  Under the weak instrument scenario, our strategy is to measure more data from $ X_{12} $ to reduce the variance of $ \hat{\sigma}_{12} $ and be more confident in its estimate, motivating the next section.

\section{More Instrument Data} \label{section_more_iv_data}

We assume to have $ n_1 \in \mathbb{N} $ initial datapoints of $ X_{123} $ and additionally $ n_2 $ datapoints of $ X_{12} $ which were measured to further investigate the relationship between the instrument and the treatment variable.  With the available information, the estimand $ \hat{\lambda}_{12} $ can be computed based on $ n_1 $ samples for $ \hat{\sigma}_{13}$ and $ n_1 + n_2 $ samples for $ \hat{\sigma}_{12} $.  The following theorem derives its asymptotic variance in terms of $ n_1 $ with Cram\'er's Theorem, and the full proof can be found in Appendix \ref{subsection_appendix_proof_iv_auxillary_variance}.


\begin{theorem}\label{theorem_iv_auxillary12_variance}
Let $ X $ follow Model \ref{model_instrumental_sem}.  Let $ \hat{m}_{jk;1} = \frac{1}{n_{1}} \sum_{i = 1}^{n_{1}} X^{(i)}_{j} X^{(i)}_{k} $ and $ \hat{m}_{jk;2} = \frac{1}{n_{2}} \sum_{i = n_1 + 1}^{n_1 + n_2} X^{(i)}_{j} X^{(i)}_{k} $ be the sample moments of $ X_{j}X_{k} $ on the two datasets.  Furthermore, let the asymptotic relation $ n_2 / n_1 $ be $ \gamma $.  The sample size adjusted estimator for $ \lambda_{23} $ is
\begin{equation*}
\hat{\lambda}_{23} = \frac{ \hat{m}_{13;1}}{\frac{1}{1 + \gamma}  \hat{m}_{12;1} + \frac{\gamma}{1 + \gamma} \hat{m}_{12;2}} 
\end{equation*}
and it attains the asymptotic distribution
\(
    \sqrt{n_1} (\hat{\lambda}_{23} - \lambda_{23}) \overset{d}{\longrightarrow} N(0, v(\gamma))
\)
with the variance 
\begin{align} \label{formula_iv_additional_variance}
    v(\gamma) & = \frac{1}{\sigma_{12}^2} \left( \frac{\sigma_{13}^2 \sigma_{11} \sigma_{22}}{(1 + \gamma) \sigma_{12}^2} 
    - 2 \frac{\sigma_{13} \sigma_{11}\sigma_{23}}{(1 + \gamma) \sigma_{12}} +  \sigma_{11} \sigma_{33} + 2 \sigma_{13}^2 \right) \\ \nonumber
    & = \frac{1}{\omega_{11}\lambda_{12}^2} \left( \frac{\gamma}{(1 + \gamma)}\lambda_{23}^2 \omega_{22} + \frac{(2 + 3 \gamma)}{(1 + \gamma)} \lambda_{23}^2 \omega_{11}\lambda_{12}^2 + \frac{2\gamma}{(1 + \gamma)} \lambda_{23} \omega_{23} +  \omega_{33}  \right).
\end{align}
\end{theorem}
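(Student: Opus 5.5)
We will apply the multivariate CLT and Cramér's theorem (the delta method) to the three sample moments $(\hat m_{13;1}, \hat m_{12;1}, \hat m_{12;2})$.

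\emph{Step 1: joint CLT.} Since $X$ is centered Gaussian, $\mathrm{E}[X_jX_k]=\sigma_{jk}$. Isserlis' theorem gives
\[
\mathrm{Cov}(X_jX_k,X_lX_m)=\sigma_{jl}\sigma_{km}+\sigma_{jm}\sigma_{kl}.
\]
On the first dataset this yields
\[
\sqrt{n_1}\,(\hat m_{13;1}-\sigma_{13},\,\hat m_{12;1}-\sigma_{12})\to N(0,\Gamma_1),
\]
with
\[
\Gamma_1=\begin{pmatrix}\sigma_{11}\sigma_{33}+\sigma_{13}^2 & \sigma_{11}\sigma_{23}+\sigma_{12}\sigma_{13}\\ \cdot & \sigma_{11}\sigma_{22}+\sigma_{12}^2\end{pmatrix}.
\]
The second dataset is independent of the first. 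Because $n_1/n_2\to 1/\gamma$, we get $\sqrt{n_1}(\hat m_{12;2}-\sigma_{12})\to N(0,(\sigma_{11}\sigma_{22}+\sigma_{12}^2)/\gamma)$, and this limit is independent of the first one. Stacking the three gives a joint normal limit with a block-diagonal covariance.

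\emph{Step 2: delta method.} Write $\hat\lambda_{23}=g(\hat m_{13;1},\hat m_{12;1},\hat m_{12;2})$, where
\[
g(a,b,c)=\frac{a}{\tfrac{1}{1+\gamma}b+\tfrac{\gamma}{1+\gamma}c}.
\]
At $(\sigma_{13},\sigma_{12},\sigma_{12})$ the gradient is
\[
\nabla g=\left(\frac{1}{\sigma_{12}},\ -\frac{\sigma_{13}}{(1+\gamma)\sigma_{12}^2},\ -\frac{\gamma\sigma_{13}}{(1+\gamma)\sigma_{12}^2}\right).
\]
The function $g$ is differentiable there because $\sigma_{12}=\omega_{11}\lambda_{12}\neq 0$.

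One technical point needs care. In the estimator, the weights use $\gamma$ itself, whereas in practice they might be $n_1/(n_1+n_2)$ and $n_2/(n_1+n_2)$. Replacing the weights by these empirical ratios changes the denominator only by $o_p(n_1^{-1/2})$, since both moments converge to the same limit $\sigma_{12}$. I would dispatch this with Slutsky's lemma.

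\emph{Step 3: variance and simplification.} Compute $v=\nabla g^\top\Gamma\nabla g$, using
\[
\mathrm{Var}(\hat b)+\gamma^2\,\mathrm{Var}(\hat c)=(1+\gamma)(\sigma_{11}\sigma_{22}+\sigma_{12}^2).
\]
This gives
\[
v=\frac{1}{\sigma_{12}^2}\left[\sigma_{11}\sigma_{33}+\sigma_{13}^2-\frac{2\sigma_{13}(\sigma_{11}\sigma_{23}+\sigma_{12}\sigma_{13})}{(1+\gamma)\sigma_{12}}+\frac{\sigma_{13}^2(\sigma_{11}\sigma_{22}+\sigma_{12}^2)}{(1+\gamma)\sigma_{12}^2}\right].
\]
Collecting the $\sigma_{13}^2$ terms gives $\sigma_{13}^2\bigl(1-\tfrac{2}{1+\gamma}+\tfrac{1}{1+\gamma}\bigr)=\tfrac{\gamma}{1+\gamma}\sigma_{13}^2$. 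So the non-fractional $\sigma_{13}^2$ coefficient in the first displayed line should be checked against this; I expect the stated form to be a regrouping of these terms.

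For the second line, substitute the covariances from (\ref{formula_iv_variance}):
\begin{itemize}
\item $\sigma_{12}=\omega_{11}\lambda_{12}$ and $\sigma_{13}=\lambda_{23}\sigma_{12}$, so $\sigma_{13}/\sigma_{12}=\lambda_{23}$;
\item $\sigma_{23}=\omega_{23}+\lambda_{23}\sigma_{22}$;
\item $\sigma_{33}=\omega_{33}+2\omega_{23}\lambda_{23}+\lambda_{23}^2\sigma_{22}$.
\end{itemize}
Then expand and collect the coefficients of $\omega_{22}$, $\omega_{11}\lambda_{12}^2$, $\omega_{23}$ and $\omega_{33}$, writing $\sigma_{22}=\omega_{22}+\omega_{11}\lambda_{12}^2$.

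The main obstacle is this final bookkeeping: the cancellations in the $\lambda_{23}\omega_{23}$ and $\lambda_{23}^2\sigma_{22}$ terms, and matching the exact coefficients claimed, namely $\frac{2+3\gamma}{1+\gamma}$ and $\frac{2\gamma}{1+\gamma}$. I would verify these by expanding symbolically term by term. The cases $\gamma=0$ (the classical IV variance) and $\gamma\to\infty$ serve as sanity checks, and I would adjust the intermediate grouping if they disagree.
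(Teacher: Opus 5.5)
Your Steps 1--2 follow the paper's route exactly: a joint CLT for the three moments with the independent second-sample block scaled by $1/\gamma$, Isserlis, and Cram\'er's theorem with the same gradient. Your entries $\sigma_{11}\sigma_{33}+\sigma_{13}^2$, $\sigma_{11}\sigma_{23}+\sigma_{12}\sigma_{13}$ and $\sigma_{11}\sigma_{22}+\sigma_{12}^2$ are the correct asymptotic (co)variances, so your displayed $v$ is right.

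The gap is at the end of Step 3. Your expression cannot be regrouped into the stated one, because the stated formula is false. Collecting terms, you get
\[
v(\gamma)=\frac{1}{\sigma_{12}^2}\left(\frac{\sigma_{13}^2\sigma_{11}\sigma_{22}}{(1+\gamma)\sigma_{12}^2}-\frac{2\sigma_{13}\sigma_{11}\sigma_{23}}{(1+\gamma)\sigma_{12}}+\sigma_{11}\sigma_{33}+\frac{\gamma}{1+\gamma}\sigma_{13}^2\right).
\]
This is the claimed expression minus $\frac{2+\gamma}{1+\gamma}\lambda_{23}^2$. The paper's proof reaches $2\sigma_{13}^2$ because it inserts the raw moments $\mathbb{E}[(X_1X_2)^2]=\sigma_{11}\sigma_{22}+2\sigma_{12}^2$ and $\mathbb{E}[(X_1X_3)^2]=\sigma_{11}\sigma_{33}+2\sigma_{13}^2$ in place of the variances. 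Off the diagonal, however, it uses the centered covariance $\sigma_{11}\sigma_{23}+\sigma_{12}\sigma_{13}$. The $\sigma_{13}^2$ coefficient then becomes $2-\frac{2}{1+\gamma}+\frac{2}{1+\gamma}=2$.

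Your own sanity checks settle which version is correct. Take $\gamma=0$: since $X_3-\lambda_{23}X_2=\varepsilon_3$ and $X_1=\varepsilon_1$ is independent of $\varepsilon_3$, one has $\hat\lambda_{23}-\lambda_{23}=\hat m_{12;1}^{-1}\,n_1^{-1}\sum_i X_1^{(i)}\varepsilon_3^{(i)}$. Hence $v(0)=\omega_{11}\omega_{33}/\sigma_{12}^2=\omega_{33}/(\omega_{11}\lambda_{12}^2)$, the textbook IV variance, with no $2\lambda_{23}^2$ term; so Corollary~\ref{corollary_variance_alpha0} is off as well. As $\gamma\to\infty$, the limit must be $\mathrm{Var}(X_1X_3)/\sigma_{12}^2$, not $\mathbb{E}[(X_1X_3)^2]/\sigma_{12}^2$.

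Carrying out your substitution yields
\[
v(\gamma)=\frac{1}{\omega_{11}\lambda_{12}^2}\left(\frac{\gamma}{1+\gamma}\lambda_{23}^2\omega_{22}+\frac{2\gamma}{1+\gamma}\lambda_{23}^2\omega_{11}\lambda_{12}^2+\frac{2\gamma}{1+\gamma}\lambda_{23}\omega_{23}+\omega_{33}\right).
\]
So the coefficient $\frac{2+3\gamma}{1+\gamma}$ you planned to match should be $\frac{2\gamma}{1+\gamma}$.

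Rather than adjusting the grouping, conclude that the theorem holds only with this corrected $v$. You should also note that Corollary~\ref{corollary_iv_minimizing_alpha} and the $\chi_1,\chi_2$ of Theorem~\ref{theorem_iv_optimization_problem} inherit the error.

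Your Slutsky remark about empirical weights is a valid refinement the paper omits. State it via the rate, though, not just a common limit: the weight error is $o(1)$ and $\hat m_{12;1}-\hat m_{12;2}=O_p(n_1^{-1/2})$.
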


The variance $ v( \gamma ) $ in original parameterization suggests that increasing $ \omega_{22} $ or $ \omega_{33} $ increases the uncertainty of $ \hat{\lambda}_{23} $ in all configurations.  For the remaining original parameters, the variance can decrease or increase, mainly depending on the sign and magnitude of $ \omega_{23} \lambda_{23} $.  Note that $ \omega_{22} \omega_{33} > \omega_{23}^2 $ prevents negative $ v(\gamma)$ if $ \omega_{23} \lambda_{23} $ approaches $ - \infty $.  In the special case where only the primary dataset is available, so $ \gamma = n_2 = 0 $, Theorem \ref{theorem_iv_auxillary12_variance} reduces as follows:

\begin{corollary} \label{corollary_variance_alpha0}
Under the assumptions of Theorem \ref{theorem_iv_auxillary12_variance} with $\gamma = 0$, the estimator $ \hat{\lambda}_{23} $ attains the following asymptotic distribution:
\begin{equation*}
    \sqrt{n} (\hat{\lambda}_{23} - \lambda_{23}) \overset{d}{\longrightarrow} N(0,  2 \lambda_{23}^2 + \frac{\omega_{33}}{\omega_{11} \lambda_{12}^2})
\end{equation*}
\end{corollary}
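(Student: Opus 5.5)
Since the corollary is the special case $\gamma = 0$ of Theorem~\ref{theorem_iv_auxillary12_variance}, the plan is to specialize that theorem rather than redo the delta-method argument. With $\gamma = 0$ there are no auxiliary samples. The weights $\frac{1}{1+\gamma}$ and $\frac{\gamma}{1+\gamma}$ become $1$ and $0$, so the estimator reduces to $\hat{\lambda}_{23} = \hat{m}_{13;1}/\hat{m}_{12;1}$, which is the usual IV ratio on the $n = n_1$ complete samples. The theorem then gives asymptotic normality of $\sqrt{n}(\hat{\lambda}_{23}-\lambda_{23})$ with variance $v(0)$.

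One point needs care. The theorem only fixes the limit of the ratio $n_2/n_1$, so $\gamma=0$ literally means $n_2 = o(n_1)$. To avoid any ambiguity about $\hat{m}_{12;2}$ when $n_2 = 0$, I would observe that the weight on $\hat{m}_{12;2}$ is exactly zero here. The Cram\'er/delta-method computation in the theorem's proof then only involves the joint asymptotic normality of $(\hat{m}_{12;1},\hat{m}_{13;1})$, so the same argument goes through verbatim with the second dataset absent.

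The remaining step is to evaluate $v(0)$. I would use the second (original-parameter) form of \eqref{formula_iv_additional_variance}. At $\gamma=0$:
\begin{itemize}
\item the term $\frac{\gamma}{1+\gamma}\lambda_{23}^2\omega_{22}$ vanishes;
\item the term $\frac{2\gamma}{1+\gamma}\lambda_{23}\omega_{23}$ vanishes;
\item the coefficient $\frac{2+3\gamma}{1+\gamma}$ becomes $2$.
\end{itemize}
Hence
\begin{equation*}
v(0) = \frac{1}{\omega_{11}\lambda_{12}^2}\left(2\lambda_{23}^2\omega_{11}\lambda_{12}^2 + \omega_{33}\right) = 2\lambda_{23}^2 + \frac{\omega_{33}}{\omega_{11}\lambda_{12}^2}.
\end{equation*}
This is the claimed variance.

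As a sanity check, I would also plug $\gamma=0$ into the first form and substitute the entries of $\Sigma$ from \eqref{formula_iv_variance}, using $\sigma_{13}=\lambda_{23}\sigma_{12}$ and $\sigma_{23}=\omega_{23}+\lambda_{23}\sigma_{22}$. The first two terms inside the bracket then combine to $-\lambda_{23}\sigma_{11}(\lambda_{23}\sigma_{22}+2\omega_{23})$. Adding $\sigma_{11}\sigma_{33}$ cancels the $\omega_{23}$ and $\sigma_{22}$ contributions and leaves $\sigma_{11}\omega_{33}$. The term $2\sigma_{13}^2/\sigma_{12}^2$ contributes the remaining $2\lambda_{23}^2$.

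I expect no real obstacle. The only subtle point is justifying the degenerate case $n_2=0$, which is handled by the zero weight as described above.
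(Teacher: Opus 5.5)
You take the same route as the paper, which obtains the corollary simply by setting $\gamma=0$ in Theorem~\ref{theorem_iv_auxillary12_variance}. Your handling of the vanishing weight on $\hat m_{12;2}$ is fine, and your evaluation of $v(0)$ from the stated formula is arithmetically correct. The gap is that the formula you inherit is itself wrong, so the displayed variance is false whenever $\lambda_{23}\neq 0$.

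In the Cram\'er step, the limit covariance of $\sqrt{n}\,(\hat m_{12;1}-\sigma_{12},\,\hat m_{13;1}-\sigma_{13})$ is the covariance matrix of $(X_1X_2,X_1X_3)$. Its diagonal entries are $\mathrm{Var}(X_1X_2)=\sigma_{11}\sigma_{22}+\sigma_{12}^2$ and $\mathrm{Var}(X_1X_3)=\sigma_{11}\sigma_{33}+\sigma_{13}^2$. The theorem's proof instead inserts the raw moments $\mathbb{E}[(X_1X_2)^2]=\sigma_{11}\sigma_{22}+2\sigma_{12}^2$ and $\mathbb{E}[(X_1X_3)^2]=\sigma_{11}\sigma_{33}+2\sigma_{13}^2$ on the diagonal. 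The off-diagonal entry $\sigma_{11}\sigma_{23}+\sigma_{12}\sigma_{13}$, by contrast, is correctly centered. With consistently centered entries, the $\sigma_{13}^2$-contributions cancel ($\lambda_{23}^2-2\lambda_{23}^2+\lambda_{23}^2=0$), and
\[
v(0)=\frac{\sigma_{11}}{\sigma_{12}^2}\bigl(\lambda_{23}^2\sigma_{22}-2\lambda_{23}\sigma_{23}+\sigma_{33}\bigr)=\frac{\sigma_{11}\,\mathrm{Var}(X_3-\lambda_{23}X_2)}{\sigma_{12}^2}=\frac{\omega_{33}}{\omega_{11}\lambda_{12}^2},
\]
which is the textbook IV variance. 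A direct argument confirms it. We have $\hat\lambda_{23}-\lambda_{23}=\bigl(n^{-1}\sum_i X_1^{(i)}\varepsilon_3^{(i)}\bigr)/\hat m_{12;1}$, and $X_1=\varepsilon_1$ is independent of $\varepsilon_3$. The CLT and Slutsky then give the limit $N(0,\omega_{11}\omega_{33}/\sigma_{12}^2)$. For example, with $\omega_{11}=\omega_{22}=\omega_{33}=\lambda_{12}=\lambda_{23}=1$ and $\omega_{23}=0$, the true asymptotic variance is $1$, not $3$.

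Your sanity check cannot detect this, because it re-evaluates the first form of \eqref{formula_iv_additional_variance}, which carries the same error. In fact it isolates the error: everything collapses to the correct $\sigma_{11}\omega_{33}/\sigma_{12}^2$ except the term $2\sigma_{13}^2/\sigma_{12}^2$, which is exactly the artifact of the uncentered diagonal moments. What is missing is an independent derivation of $v(0)$ rather than a specialization of the theorem. Done correctly, it shows the corollary should read $\sqrt{n}(\hat\lambda_{23}-\lambda_{23})\overset{d}{\longrightarrow}N\bigl(0,\omega_{33}/(\omega_{11}\lambda_{12}^2)\bigr)$. Correspondingly, the coefficient $\frac{2+3\gamma}{1+\gamma}$ in Theorem~\ref{theorem_iv_auxillary12_variance} should be $\frac{2\gamma}{1+\gamma}$.
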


In the scenario with one available dataset on $ X_{123} $, increasing $  \lambda_{23}^2  $ or $ \omega_{33} $ or decreasing $ \omega_{11} $ or $ \lambda_{12}^2 $, increases the variance.  Intuitively, this means that the weaker the instrument is, the worse the estimand $ \hat{\lambda}_{23} $ behaves.  In the latter case, it can, therefore, be sensible to gather more data on $ X_{12} $:

\begin{corollary}
\label{corollary_iv_minimizing_alpha}
Let $ X $ follow Model \ref{model_instrumental_sem}.  The asymptotic variance function $v(\gamma) $ is decreasing for $ \gamma > 0 $ if and only if
\begin{equation} \label{formula_iv_minimizing_alpha}
    \lambda_{12}^2 < - \frac{\omega_{22} + 2 \omega_{23} / \lambda_{23}}{\omega_{11}}.
\end{equation}
In particular, this can only be the case if $ \omega_{23}\lambda_{23} < 0 $. 
\end{corollary}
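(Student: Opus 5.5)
The plan is to use the second (original-parameter) expression for $v(\gamma)$ in Theorem \ref{theorem_iv_auxillary12_variance} and split it into a part that does not depend on $\gamma$ plus a multiple of $\frac{1}{1+\gamma}$. Monotonicity then reduces to the sign of a single coefficient.

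\textbf{Step 1: rewrite the $\gamma$-dependent fractions.} I will use the identities
\begin{equation*}
\frac{\gamma}{1+\gamma} = 1 - \frac{1}{1+\gamma}, \qquad \frac{2+3\gamma}{1+\gamma} = 3 - \frac{1}{1+\gamma}, \qquad \frac{2\gamma}{1+\gamma} = 2 - \frac{2}{1+\gamma}.
\end{equation*}
Substituting them gives $v(\gamma) = A + \frac{B}{1+\gamma}$ with
\begin{equation*}
A = \frac{\lambda_{23}^2\omega_{22} + 3\lambda_{23}^2\omega_{11}\lambda_{12}^2 + 2\lambda_{23}\omega_{23} + \omega_{33}}{\omega_{11}\lambda_{12}^2}, \qquad B = -\frac{\lambda_{23}^2\omega_{22} + \lambda_{23}^2\omega_{11}\lambda_{12}^2 + 2\lambda_{23}\omega_{23}}{\omega_{11}\lambda_{12}^2}.
\end{equation*}
The function $\gamma \mapsto \frac{1}{1+\gamma}$ is strictly decreasing on $\gamma>0$. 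Hence $v$ is decreasing there if and only if $B>0$; it is constant if $B=0$ and increasing if $B<0$. Equivalently, one can compute $v'(\gamma) = -B/(1+\gamma)^2$ and read off the sign.

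\textbf{Step 2: translate $B>0$ into the stated inequality.} Because $\omega_{11}\lambda_{12}^2>0$, the condition $B>0$ is equivalent to
\begin{equation*}
\lambda_{23}^2(\omega_{22} + \omega_{11}\lambda_{12}^2) + 2\lambda_{23}\omega_{23} < 0.
\end{equation*}
If $\lambda_{23}=0$, the left side is $0$, so $v$ is constant and not decreasing. In that case the statement's condition must be read as failing, since the quotient $\omega_{23}/\lambda_{23}$ is undefined; this is the one point where I need to be careful about the convention. For $\lambda_{23}\neq 0$, I divide by $\lambda_{23}^2>0$ to get $\omega_{22} + \omega_{11}\lambda_{12}^2 + 2\omega_{23}/\lambda_{23} < 0$. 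Rearranging and dividing by $\omega_{11}>0$ yields exactly \eqref{formula_iv_minimizing_alpha}.

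\textbf{Step 3: the final claim.} The left side of \eqref{formula_iv_minimizing_alpha} is nonnegative, so the inequality forces $\omega_{22} + 2\omega_{23}/\lambda_{23} < 0$. Since $\omega_{22}>0$, this requires $\omega_{23}/\lambda_{23}<0$, which is equivalent to $\omega_{23}\lambda_{23}<0$ after multiplying by $\lambda_{23}^2$.

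The only real obstacle is the bookkeeping in Step 1, together with handling the degenerate case $\lambda_{23}=0$. As a sanity check, I would also verify the result against the first ($\sigma$-parameterized) form of $v(\gamma)$. There the $\frac{1}{1+\gamma}$ coefficient is $\frac{\sigma_{13}\sigma_{11}}{\sigma_{12}^3}\left(\frac{\sigma_{13}\sigma_{22}}{\sigma_{12}} - 2\sigma_{23}\right)$, and substituting \eqref{formula_iv_variance} should reproduce $B$.
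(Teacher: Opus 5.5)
Your proof is correct and is essentially the paper's own argument. The paper differentiates $v$ to get $v'(\gamma)=\frac{1}{(1+\gamma)^2}\frac{1}{\omega_{11}\lambda_{12}^2}\bigl(\lambda_{23}^2\omega_{22}+\lambda_{23}^2\omega_{11}\lambda_{12}^2+2\lambda_{23}\omega_{23}\bigr)$, which is exactly your $-B/(1+\gamma)^2$, and reads off its sign; you additionally handle $\lambda_{23}=0$ and derive the clause $\omega_{23}\lambda_{23}<0$, both of which the paper leaves implicit.

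One caution about your planned $\sigma$-form sanity check: it will reproduce $B$, but it is not an independent check. Both forms of $v$ in Theorem~\ref{theorem_iv_auxillary12_variance} come from the same appendix computation, which inserts $\mathbb{E}[(X_1X_2)^2]$ and $\mathbb{E}[(X_1X_3)^2]$ where $\mathrm{Var}(X_1X_2)$ and $\mathrm{Var}(X_1X_3)$ belong. With the true variances, the bracket above gains a second $\lambda_{23}^2\omega_{11}\lambda_{12}^2$, and the threshold in \eqref{formula_iv_minimizing_alpha} would have $2\omega_{11}$ in the denominator. So your derivation, like the paper's, is only as sound as that input.
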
 

This statement must be understood in the context of the chosen asymptotics, where only the number of $ X_{123} $ samples is in focus. It treats $ n_1 $ samples of $ X_{123} $ equally as $ n_1 $ samples of $ X_{123} $ with additional $ n_2 $ samples of $ X_{12} $.  Consequently, measuring only $ X_{123} $ might not be the variance-reducing strategy as further $ X_{12} $ samples can be requested without penalization.  Particularly, in cases where the sign of $ \omega_{23} / \lambda_{23} $ is negative, so in cases with counteracting effect paths between $ X_2 $ and $ X_3 $, Corollary \ref{corollary_iv_minimizing_alpha} implies that more $ X_{12} $ data is beneficial. Additionally, Equation \ref{formula_iv_minimizing_alpha} implies that small $ \lambda_{12}^2 \omega_{11} $, so a weak instrument scenario, is necessary for additional partial samples.  As an extension and to leverage the reduced cost of measuring only $ X_{12} $, we now add a budget to formulate the problem of informed future data acquisition.

\section{Gathering Further Samples under a Fixed Budget} \label{section_gathering_budget}

Suppose that under Model \ref{model_instrumental_sem}, $ n $ samples of $ X_{123} $ are collected. Further, let there be an additional budget of $ b \in \mathbb{R} $ from which samples of $ X_{123} $ and $ X_{12} $ can be requested for a cost of $ c_1 $, respectively $ c_2 $.  These costs ought to satisfy the relation $ c_2 < c_1 $.  The goal is to spend the available budget in an asymptotically optimal way to reduce the variance of the estimator $ \hat{\lambda}_{12} $. Let $ m_1 $ denote the number of additional $ X_{123} $ samples and $ m_2 $ the number of additional $ X_{12} $ samples.  With the variance function $ v $ from Equation \eqref{formula_iv_additional_variance}, we obtain the optimization problem
\begin{equation} \label{formula_iv_optimization_problem}
    \min_{(m_1, m_2) \in \mathbb{N}^2} \frac{v(\frac{m_2}{n + m_1})}{n + m_1}  \quad \text{, s.t.} \quad  c_1 m_1 + c_2 m_2 \leq b.
\end{equation}
where the objective is derived from the asymptotic variance of $ \hat{\lambda}_{23} - \lambda_{23} $, scaled by the number of complete cases samples $ X_{123} $.  We denote a solution by $ (m_1^\ast, m_2^\ast)$.  

\begin{theorem} \label{theorem_iv_optimization_problem}
Let $ X $ follow Model \ref{model_instrumental_sem}, and let $ b, c_1, c_2 > 0 $ as well as $ n \in \mathbb{N} $ set the Optimization Problem \eqref{formula_iv_optimization_problem}.  The solution to the related continuous optimization problem is given by $ (0, b / c_2) $ if
\begin{equation*}
    \chi_1 \coloneqq (c_{1} - c_{2}) ( \sigma_{11} \sigma_{13}^{2} \sigma_{22} - 2 \, \sigma_{11} \sigma_{12} \sigma_{13} \sigma_{23}) \leq  2 \, \sigma_{12}^{2} \sigma_{13}^{2} c_{2} + \sigma_{11} \sigma_{12}^{2} c_{2} \sigma_{33}\eqqcolon \chi_2.
\end{equation*}
Otherwise, a positive amount of $ X_{12} $ samples are requested and the solution reads
\begin{equation*}
    (\frac{b - \xi c_2 n}{c_1 + \xi c_2}, \frac{ \xi ( b + c_1 n)}{c_1 + \xi c_2}), \text{ where } \xi = \min( -1 + \frac{\sqrt{ \chi_1 \chi_2}}{\chi_2}, \frac{b}{c_2 n}).
\end{equation*}
\end{theorem}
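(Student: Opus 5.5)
The key observation is that the objective in \eqref{formula_iv_optimization_problem} splits into two reciprocal terms, one for the complete samples and one for all samples. Write $v(\gamma)=\sigma_{12}^{-2}\bigl(A/(1+\gamma)+B\bigr)$ with
\begin{equation*}
A=\frac{\sigma_{11}\sigma_{13}^2\sigma_{22}}{\sigma_{12}^2}-\frac{2\sigma_{11}\sigma_{13}\sigma_{23}}{\sigma_{12}},\qquad B=\sigma_{11}\sigma_{33}+2\sigma_{13}^2>0,
\end{equation*}
which is read off from the first line of \eqref{formula_iv_additional_variance} in Theorem \ref{theorem_iv_auxillary12_variance}. Put $N_1=n+m_1$, $N_2=n+m_1+m_2$ and $\gamma=m_2/N_1$. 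Since $1+\gamma=N_2/N_1$, the objective becomes
\begin{equation*}
f(m_1,m_2)=\frac{v(m_2/N_1)}{N_1}=\sigma_{12}^{-2}\Bigl(\frac{A}{N_2}+\frac{B}{N_1}\Bigr).
\end{equation*}
Multiplying by $\sigma_{12}^2$ gives $\chi_1=(c_1-c_2)\sigma_{12}^2A$ and $\chi_2=c_2\sigma_{12}^2B$. So the condition in the statement only compares $(c_1-c_2)A$ with $c_2B$.

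First I show the budget constraint binds in the relaxed problem. We have $\partial f/\partial m_2=-A/N_2^2$, and $\partial f/\partial m_1=-(A/N_2^2+B/N_1^2)$. The latter is negative because it equals $-N_1^{-2}\bigl(A/(1+\gamma)^2+B\bigr)$, and positivity of this bracket follows from $v>0$, using $B>0$ and $0<1/(1+\gamma)\le 1$. If $A\le 0$, the objective is decreasing in $m_1$ and non-increasing in $m_2$. In that case $\chi_1\le 0<\chi_2$, and shifting budget from $m_2$ to $m_1$ only helps, so the optimum puts everything into complete samples. If $A>0$, the objective is decreasing in both coordinates, so an optimum lies on $c_1m_1+c_2m_2=b$.

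Next I reparametrize the budget line by $\xi=\gamma=m_2/N_1$. The equation $c_1(N_1-n)+c_2\xi N_1=b$ gives $N_1=(b+c_1n)/(c_1+\xi c_2)$. Hence $m_1=(b-\xi c_2 n)/(c_1+\xi c_2)$ and $m_2=\xi N_1=\xi(b+c_1n)/(c_1+\xi c_2)$, which are exactly the formulas in the statement. The constraints $m_1,m_2\ge 0$ become $\xi\in[0,\,b/(c_2n)]$.

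The objective on this line is proportional to
\begin{equation*}
g(\xi)=(c_1+\xi c_2)\Bigl(\frac{A}{1+\xi}+B\Bigr),\qquad g'(\xi)=c_2B-\frac{(c_1-c_2)A}{(1+\xi)^2}.
\end{equation*}
For $A>0$, $g'$ is increasing in $\xi$, so $g$ is convex on the interval. The stationary point solves $(1+\xi)^2=\chi_1/\chi_2$, giving $\xi^\ast=-1+\sqrt{\chi_1/\chi_2}=-1+\sqrt{\chi_1\chi_2}/\chi_2$.

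If $\chi_1\le\chi_2$, then $g'(0)\ge 0$, so $g$ is non-decreasing on the interval and $\xi=0$ is optimal. This means no $X_{12}$ samples, i.e.\ $(m_1,m_2)=(b/c_1,0)$. This is the case the statement contrasts with ``otherwise a positive amount of $X_{12}$ samples'', so I read its displayed first-case point as this corner. The other corner $(0,b/c_2)$ corresponds to $\xi=b/(c_2n)$ and arises only through the clipping below. If $\chi_1>\chi_2$, then $\xi^\ast>0$, and by convexity the constrained minimizer is $\min(\xi^\ast,b/(c_2n))$. Substituting this $\xi$ into the expressions for $m_1,m_2$ gives the stated solution.

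The main obstacle is the sign bookkeeping. One must ensure that $\partial f/\partial m_1<0$ (via $v>0$), which justifies reducing to the budget line. One must also settle the $A\le 0$ case, where convexity of $g$ fails but monotonicity saves the argument. The remaining algebra, namely the $N_1,N_2$ rewriting and the derivative of $g$, is routine.
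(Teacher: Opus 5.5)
Your proof is correct and is essentially the paper's: both bind the budget, reduce to the single variable $\xi=m_2/(n+m_1)$, and solve $(1+\xi)^2=\chi_1/\chi_2$ (the paper's condition $v'(\xi)(c_1/c_2+\xi)+v(\xi)=0$ is your $g'(\xi)=0$ up to the factor $c_2\sigma_{12}^2$). Your convexity and monotonicity arguments add the global-optimality and clipping justification that the paper skips; note that both quietly use the standing assumption $c_2<c_1$ of Section~\ref{section_gathering_budget}. You are also right that the first case must be $(b/c_1,0)$, not $(0,b/c_2)$: the paper's proof conflates $m^\ast=m_1$ with $m_2$ when it calls $\xi>0$ ``necessary for $m^\ast>0$'', which is where that slip comes from.
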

The condition $ \chi_1 \leq \chi_2 $ expressed in original parameterization is
\begin{equation*}
    \frac{c_{1}}{c_2}   (-2 \lambda_{23}\omega_{23} -  \lambda_{23}^2 \omega_{22} - \lambda_{23}^2 \omega_{11}\lambda_{12}^2) \leq 2 \,  \omega_{11} \lambda_{12}^2\lambda_{23}^2 + \omega_{33}
\end{equation*}
and implies three conceptual findings:  Firstly, as expected by Corollary \ref{corollary_variance_alpha0}, additional $ X_{12} $ samples are only beneficial if the causal effect $ \lambda_{23} $ and the confounding strength $ \omega_{23} $ have opposite signs.  Secondly, only the model parameters and the ratio $c_1 / c_2 $ determine whether additional partial samples are requested. And thirdly, weakening the instrument $ \lambda_{12}^2 $ always weakens the condition $ \chi_1 < \chi_2 $ condition in favour of more potential $ X_{12}$ samples.

\begin{example} \label{exapmle_1}
    Let $ X $ follow Model \ref{model_instrumental_sem} with $ \omega_{11} = 0.5 $, $ \omega_{22} = 1$, $ \omega_{33} = 2 $, $ \lambda_{12} = 0.2, \lambda_{23} = 1 $ and $ \omega_{23} = -1 $. In this case, the covariance between the treatment and outcome is $ 0.1 $, underestimating the actual causal effect of $ \lambda_{23} $ by a factor of $ 10 $.  Corollary \ref{corollary_iv_minimizing_alpha} indicates a benefit in an increased number of $ X_{12} $ samples.  Let us further assume that $ n = 500 $ samples were gathered from $ X_{123} $ and that there is an additional budget of $ b = 250 $ from which samples of $ X_{123} $ can be requested for $ c_1 = 3 $ or from $ X_{12} $ for $ c_2 = 1 $.  The floored optimal data request is $ (m_1, m_2) \approx (20, 187) $ giving an approximate variance of $ \approx 17.12\% $ of the estimator $ \hat{\lambda}_{23} $.  In contrast, requesting only samples of $ X_{123} $ gives $ \approx 17.13\% $ variance.  It turns out that $14.8\%$ more budget would be necessary to match the partial measurements estimator's variance.
\end{example}

\section{Sample Size Calculation, Significance Level and Power} \label{section_power_pval}
When gathering valuable, sensitive, or hazardous data, sample size calculations must often be provided in experimental planning.  This ensures enough data is collected for scientific conclusions while minimizing the risk to test subjects and resources.  For this, we consider hypothesis tests of the form $ H_0: \lambda_{23} \leq 0$ against $ H_A: \lambda_{23} > 0 $ to detect wlog a positive effect.  The test family has a rejection criterion of the form $ \hat{\lambda}_{23} > t $.  To represent the uncertainty in the parameter $ \lambda$, let $ v(\gamma; \mu) $ denote the asymptotic variance function obtained by solving the Optimization Problem~\eqref{formula_iv_optimization_problem} where the underlying Model \ref{model_instrumental_sem} satisfies $ \lambda_{23} = \mu  \in \mathbb{R} $.

\begin{lemma}[Significance Level] \label{lemma_significance_level}
    Let $ X $ follow Model \ref{model_instrumental_sem}.  The hypothesis test $ H_0: \lambda_{23} \leq 0$ against $ H_A: \lambda_{23} > 0 $ with the rejection criterion 
    \begin{equation} \label{formula_hypothesis_test_threshold}
        \hat{\lambda}_{23} > \sup_{\mu \leq 0} F^{-1}_{N(\mu, v(\gamma^\ast; \mu) / (n + m_1^\ast))}(1 - \alpha) \eqqcolon \tau
    \end{equation} 
    then attains the asymptotic significance level $ \alpha \in (0,1) $.
\end{lemma}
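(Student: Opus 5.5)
The plan is to reduce the statement to the asymptotic normality in Theorem \ref{theorem_iv_auxillary12_variance}, applied with the optimal allocation from Theorem \ref{theorem_iv_optimization_problem}. We then check that the supremum threshold $\tau$ controls the rejection probability uniformly over the null $\mu \leq 0$.

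First I fix an arbitrary null parameter configuration of Model \ref{model_instrumental_sem} with $\lambda_{23} = \mu \leq 0$. Under the optimal design $(m_1^\ast, m_2^\ast)$ with ratio $\gamma^\ast = m_2^\ast/(n+m_1^\ast)$, Theorem \ref{theorem_iv_auxillary12_variance} (with $n_1 = n + m_1^\ast$ and $n_2 = m_2^\ast$) gives
\[
\sqrt{n+m_1^\ast}\,(\hat{\lambda}_{23} - \mu) \overset{d}{\longrightarrow} N(0, v(\gamma^\ast;\mu)).
\]
I read this, as the paper's asymptotics do, as the approximation $\hat{\lambda}_{23} \approx N(\mu, v(\gamma^\ast;\mu)/(n+m_1^\ast))$. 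Let $q_\mu := F^{-1}_{N(\mu, v(\gamma^\ast;\mu)/(n+m_1^\ast))}(1-\alpha)$. Then the asymptotic rejection probability at $\mu$ satisfies
\[
P_\mu(\hat{\lambda}_{23} > \tau) \leq P_\mu(\hat{\lambda}_{23} > q_\mu) \to \alpha,
\]
because $\tau \geq q_\mu$ by definition of the supremum and the normal cdf is continuous. Taking the supremum over $\mu \leq 0$ bounds the asymptotic size by $\alpha$. Equality is attained, or approached, at the maximizing $\mu$, which gives level exactly $\alpha$.

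The main obstacle is that $\tau$ must be finite and the passage to the limit must be legitimate. For finiteness, I note that $q_\mu = \mu + z_{1-\alpha}\sqrt{v(\gamma^\ast;\mu)/(n+m_1^\ast)}$. By the original parameterization of $v$ in Theorem \ref{theorem_iv_auxillary12_variance}, $v$ is quadratic in $\mu$ with a positive leading coefficient. Hence $\sqrt{v}$ grows at most linearly in $|\mu|$ with slope of order $(n+m_1^\ast)^{-1/2}$. For $n + m_1^\ast$ large this slope is below $1$, so $q_\mu \to -\infty$ as $\mu \to -\infty$. Since $q_\mu$ is continuous in $\mu$, the supremum over $\mu \leq 0$ is then attained on a compact set, and $\tau < \infty$.

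A caveat: $\gamma^\ast$ itself depends on $\mu$ through $\chi_1, \chi_2$. It is, however, a continuous function of $\mu$ by the explicit formula in Theorem \ref{theorem_iv_optimization_problem}, so the continuity argument still applies. The pointwise limit at each fixed $\mu$ is all that the asymptotic significance level in the statement requires.
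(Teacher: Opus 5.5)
Your main argument is correct and is the paper's own proof. You fix $\mu \le 0$, apply Theorem~\ref{theorem_iv_auxillary12_variance} at the chosen allocation, and use $\tau \ge q_\mu$ to get $\limsup_n P_\mu(\hat{\lambda}_{23} > \tau) \le \lim_n P_\mu(\hat{\lambda}_{23} > q_\mu) = \alpha$. Then you take the supremum over the null.

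Two of your additions improve on the paper:
\begin{itemize}
\item The paper never checks $\tau < \infty$. Finiteness fails when $z_{1-\alpha}\sqrt{a/(n+m_1^\ast)} > 1$, where $a$ is the $\mu^2$-coefficient of $v(\gamma^\ast;\mu)$. In that case $q_\mu \to +\infty$ as $\mu \to -\infty$, and the test never rejects.
\item You settle for the pointwise size $\sup_{\mu \le 0} \lim_n P_\mu(\cdot)$. This avoids the paper's exchange $\lim_n \sup_\mu \le \sup_\mu \lim_n$, which would need convergence uniform in $\mu$.
\end{itemize}
Your equality claim needs one more line, because the maximizer could a priori drift with $n$. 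Since $v(\gamma^\ast;\cdot)$ is a quadratic bounded away from $0$, $|\partial_\mu v|/\sqrt{v}$ is bounded. Hence $\partial_\mu q_\mu = 1 + z_{1-\alpha}\,\partial_\mu v/(2\sqrt{(n+m_1^\ast)\,v}) > 0$ for large $n + m_1^\ast$. The supremum then sits at $\mu = 0$, where the rejection probability tends to exactly $\alpha$.

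The one genuine error is your closing caveat. The paper's phrase ``obtained by solving the Optimization Problem'' invites your reading. However, the argument (yours and the paper's) only works if $\gamma^\ast$ and $m_1^\ast$ denote the single allocation actually used to collect the data. That allocation is held fixed while $\mu$ ranges over the null, and only $v(\cdot\,;\mu)$ changes. The $p$-value corollary confirms this reading: it plugs in the realized ratio $m_2/(n+m_1)$.

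Suppose instead that $\gamma^\ast$, and with it $m_1^\ast$, were re-optimized for each $\mu$. Then $q_\mu$ would be a quantile of $N(\mu, v(\gamma^\ast(\mu);\mu)/(n+m_1^\ast(\mu)))$. That is not the limit law under $\mu$ of the one estimator you actually compute, so the step $P_\mu(\hat{\lambda}_{23} > q_\mu) \to \alpha$ fails. Continuity of $\mu \mapsto \gamma^\ast(\mu)$ only bears on finiteness of $\tau$, not on this step.

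Worse, re-optimization minimizes the variance under $\mu$ by construction, so for $\alpha < 1/2$ it can only lower $q_\mu$. Concretely, at $\mu = 0$ one has $v(\gamma;0) = \omega_{33}/(\omega_{11}\lambda_{12}^2)$ for every $\gamma$. The re-optimized (continuous) design therefore spends the whole budget on complete samples. As a result, $q_0$ falls strictly below the correct null quantile whenever partial samples were actually requested. Drop the caveat and state explicitly that the design is fixed.
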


To enforce $\lambda_{23} = \mu$ in practical calculations, use the estimate $ \hat{\Sigma} $ to identify the underlying parameters\footnote{$\omega_{11} = \sigma_{1 1}, \; \lambda_{12} = \sigma_{12} / \sigma{11}, \; \omega_{22} = - \sigma_{22} - \omega_{11} \lambda_{12}^2, \; \omega_{23} = \sigma_{23} - \sigma_{22} \sigma_{13} / \sigma_{12} , \; \omega_{33} = \sigma_{33} - 2 \omega_{23} \sigma_{13} / \sigma_{12} - \sigma_{13}^2 / \sigma_{12}^2 * \sigma_{22}$.} of $ \Lambda $ and $ \Omega $, and rebuild the model with $\lambda_{23} = \mu$.

\begin{corollary}[Power] \label{corollary_power}
    Let $ X $ follow Model \ref{model_instrumental_sem}.  Let a hypothesis test with threshold $ \tau $ and significance level $ \alpha \in (0,1) $ be given according to Lemma \ref{lemma_significance_level}.  The asymptotic power of that test at effect level $ \mu > 0 $ is
    \begin{equation}
        1 - F_{N(\mu, v(\gamma^\ast; \mu) / ( n + m_1^\ast))}(\tau).
    \end{equation}
\end{corollary}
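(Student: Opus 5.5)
The power statement follows almost directly from Theorem \ref{theorem_iv_auxillary12_variance} and Lemma \ref{lemma_significance_level}; no new limit theorem is needed. Fix a true effect $\lambda_{23} = \mu > 0$ and rebuild Model \ref{model_instrumental_sem} with this value, as described after Lemma \ref{lemma_significance_level}. Solve Optimization Problem \eqref{formula_iv_optimization_problem} under this model to get $(m_1^\ast, m_2^\ast)$ from Theorem \ref{theorem_iv_optimization_problem}, and set $\gamma^\ast = m_2^\ast/(n+m_1^\ast)$. The threshold $\tau$ is the one fixed in Lemma \ref{lemma_significance_level}. It is a deterministic number that depends only on $\alpha$ and the null configurations $\mu' \le 0$, not on the alternative $\mu$. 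So the rejection region $\{\hat{\lambda}_{23} > \tau\}$ is the same event whatever the true parameter is.

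The key step is to apply Theorem \ref{theorem_iv_auxillary12_variance} with $n + m_1^\ast$ complete samples playing the role of $n_1$ and $m_2^\ast$ partial samples playing the role of $n_2$. This gives
\begin{equation*}
\sqrt{n+m_1^\ast}\,(\hat{\lambda}_{23} - \mu) \overset{d}{\longrightarrow} N\bigl(0, v(\gamma^\ast;\mu)\bigr).
\end{equation*}
In the same sense of approximation used in Lemma \ref{lemma_significance_level}, $\hat{\lambda}_{23}$ is then treated as distributed $N(\mu, v(\gamma^\ast;\mu)/(n+m_1^\ast))$. The power is the probability of rejecting under $\mu$:
\begin{equation*}
P_\mu(\hat{\lambda}_{23} > \tau) = 1 - P_\mu(\hat{\lambda}_{23} \le \tau) \approx 1 - F_{N(\mu, v(\gamma^\ast;\mu)/(n+m_1^\ast))}(\tau).
\end{equation*}
The last step uses continuity of the normal cdf, so the Portmanteau theorem gives convergence at the fixed point $\tau$ after standardizing. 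Concretely,
\begin{equation*}
P_\mu(\hat{\lambda}_{23} \le \tau) = P\Bigl(\sqrt{n+m_1^\ast}\,(\hat{\lambda}_{23}-\mu) \le \sqrt{n+m_1^\ast}\,(\tau-\mu)\Bigr),
\end{equation*}
which is then matched with $\Phi\bigl(\sqrt{n+m_1^\ast}\,(\tau-\mu)/\sqrt{v(\gamma^\ast;\mu)}\bigr)$.

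The main obstacle is making "asymptotic power" precise, because both $\tau$ and the argument $\sqrt{n+m_1^\ast}(\tau-\mu)$ move with the sample size. Taken literally, the power tends to $1$ for fixed $\mu > 0$. I would therefore state the result as an approximation of the same type as the significance-level statement. The justification is that the difference between the finite-sample cdf and the normal cdf, evaluated at any sequence of points, tends to zero uniformly by Pólya's theorem, since the limit cdf is continuous. Hence the displayed formula is asymptotically exact in the sense that the approximation error vanishes. This also assumes $\gamma^\ast$ converges to a fixed ratio, so that Theorem \ref{theorem_iv_auxillary12_variance} applies. That holds if $b$ scales with $n$; otherwise $\gamma^\ast \to 0$ and the formula is still valid with $v(0;\mu)$ from Corollary \ref{corollary_variance_alpha0}.
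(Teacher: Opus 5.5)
Your proposal is correct and takes the same route the paper implicitly relies on: the paper gives no separate proof and reads the formula off from the asymptotic normality of Theorem~\ref{theorem_iv_auxillary12_variance} at $\lambda_{23}=\mu$ together with the fixed threshold $\tau$ of Lemma~\ref{lemma_significance_level}, exactly as in that lemma's proof. Your P\'olya step (needed because $\sqrt{n+m_1^\ast}\,(\tau-\mu)$ moves with the sample size) and your observation that the power tends to $1$ for fixed $\mu>0$ make precise what the paper leaves informal.

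One fix: do not re-solve the optimization problem under the alternative. The allocation $(m_1^\ast,\gamma^\ast)$ must be the single one actually run and used to compute $\tau$; compare the $p$-value formula, where $m_1,m_2$ are fixed and only $v(\cdot\,;\mu)$ varies. Otherwise either $\tau$ depends on $\mu$, contrary to what you state, or $\tau$ and the sampling distribution you plug in belong to different experiments. Since Theorem~\ref{theorem_iv_auxillary12_variance} holds for any fixed allocation, the rest of your argument is unchanged.
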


We recommend analyzing the power of effect levels $ \mu $ around the current estimate $ \hat{\lambda}_{23} $ to cover pessimistic and optimistic scenarios.

\begin{corollary}[$p$-value]
    Let $ X $ follow Model \ref{model_instrumental_sem}.  Let $ \hat{\lambda}_{23} $ be an estimate of $ \lambda_{23} $ on $ n + m_1 $ samples of $ X_{123} $ and $ m_2 $ samples of $ X_{12} $. The p-value of the hypothesis test family from Lemma \ref{lemma_significance_level} is 
\begin{equation} \label{formula_pvalue}
    \max_{\mu \leq 0} \; 1 -  F_{N(\mu, v(m_2 / (n + m_1) ; \mu) / (n + m_1))}(\hat{\lambda}_{23}).
\end{equation}
\end{corollary}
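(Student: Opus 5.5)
The plan is to reduce the $p$-value statement to the asymptotic normality of Theorem \ref{theorem_iv_auxillary12_variance}, in the same way Lemma \ref{lemma_significance_level} and Corollary \ref{corollary_power} are obtained. By definition, the $p$-value of a test family with rejection region $\hat{\lambda}_{23} > t$ is the smallest level $\alpha$ at which the observed estimate is rejected. Equivalently, it is the supremum over the composite null $\mu \le 0$ of the probability, under $\lambda_{23} = \mu$, that a fresh estimator exceeds the observed value $\hat{\lambda}_{23}$.

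First, I fix the design: $n + m_1$ complete samples and $m_2$ partial samples, so that $\gamma = m_2/(n+m_1)$. Under the model rebuilt with $\lambda_{23} = \mu$ (as described after Lemma \ref{lemma_significance_level}), Theorem \ref{theorem_iv_auxillary12_variance} gives the approximation $\hat{\lambda}_{23} \approx N(\mu, v(\gamma;\mu)/(n+m_1))$. Hence the tail probability $P_\mu(\hat{\lambda}_{23}' > \hat{\lambda}_{23}^{\mathrm{obs}})$ is asymptotically $1 - F_{N(\mu, v(\gamma;\mu)/(n+m_1))}(\hat{\lambda}_{23}^{\mathrm{obs}})$.

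Second, I link this to the thresholds of Lemma \ref{lemma_significance_level}, with $\gamma^\ast$ replaced by the realized $\gamma$. The observed value is rejected at level $\alpha$ iff
\[
\hat{\lambda}_{23} > \sup_{\mu\le 0} F^{-1}_{N(\mu, v(\gamma;\mu)/(n+m_1))}(1-\alpha).
\]
This holds iff $F_\mu(\hat{\lambda}_{23}) > 1-\alpha$ for every $\mu \le 0$, by monotonicity of each normal cdf. That in turn is equivalent to $\alpha > \sup_{\mu\le0}\bigl(1 - F_\mu(\hat{\lambda}_{23})\bigr)$. Taking the infimum over admissible $\alpha$ yields Formula \eqref{formula_pvalue}, with the supremum in place of the maximum.

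The main obstacle is justifying the maximum rather than a supremum, and handling the dependence of the variance $v(\gamma;\mu)$ on $\mu$, which is quadratic in $\mu$ by Theorem \ref{theorem_iv_auxillary12_variance}. The map $\mu \mapsto 1 - \Phi\bigl((\hat{\lambda}_{23}-\mu)/\sqrt{v(\gamma;\mu)/(n+m_1)}\bigr)$ is continuous on $(-\infty,0]$. As $\mu\to-\infty$, the standardized argument tends to $\sqrt{n+m_1}/\sqrt{c}$ for the leading coefficient $c$ of $v$ in $\mu^2$, which by Theorem \ref{theorem_iv_auxillary12_variance} is $\frac{\gamma\omega_{22}}{(1+\gamma)\omega_{11}\lambda_{12}^2} + \frac{2+3\gamma}{1+\gamma}$. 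So the limiting tail value is $1-\Phi(\sqrt{(n+m_1)/c})$, a fixed constant. I would argue that the maximum is attained either at a finite maximizer or, degenerately, in this limit; otherwise I would simply interpret $\max$ as $\sup$. I would also note that the asymptotic statement is meant pointwise in $\mu$, as in Lemma \ref{lemma_significance_level}, so no uniformity of the normal approximation over $\mu$ is claimed.
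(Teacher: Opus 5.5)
Your argument is correct and is the intended one. The paper gives no separate proof of this corollary; it is meant to follow from Lemma~\ref{lemma_significance_level} by inverting the rejection rule $\hat{\lambda}_{23} > \tau$ at the realized ratio $\gamma = m_2/(n+m_1)$, exactly as you do. Your refinements also hold up. Reading $\max$ as $\sup$ is in general necessary, because for large $\hat{\lambda}_{23}$ the tail probability stays below $1-\Phi(\sqrt{(n+m_1)/c})$ and reaches it only as $\mu \to -\infty$. Your two ``iff'' steps are only one-sided at the boundary, but this does not affect the infimum over $\alpha$, since $v(\gamma;\mu)$ is bounded away from zero on $\mu \le 0$ by $\omega_{23}^2 < \omega_{22}\omega_{33}$.
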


\begin{figure}
    \centering
    \includegraphics[width = 0.35\textwidth]{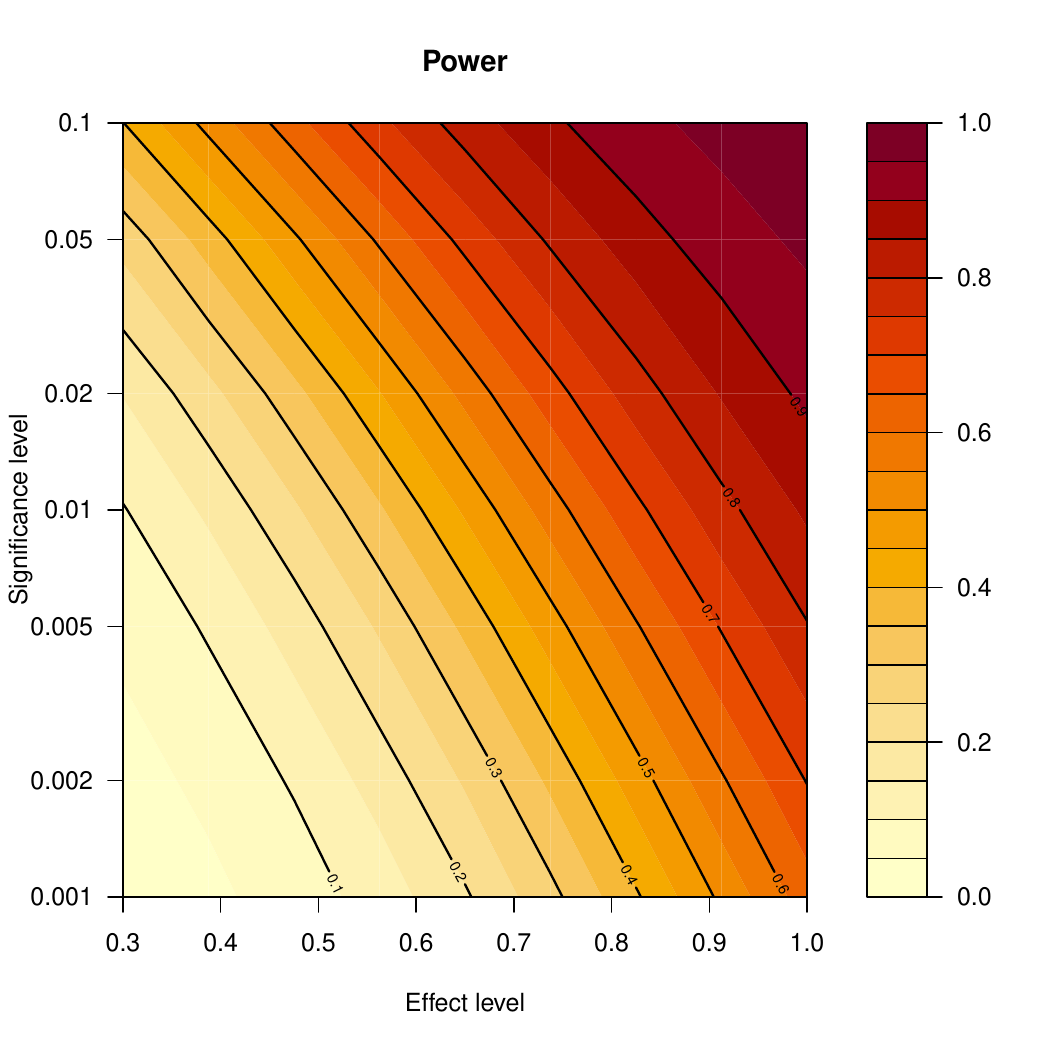} \includegraphics[width = 0.35\textwidth]{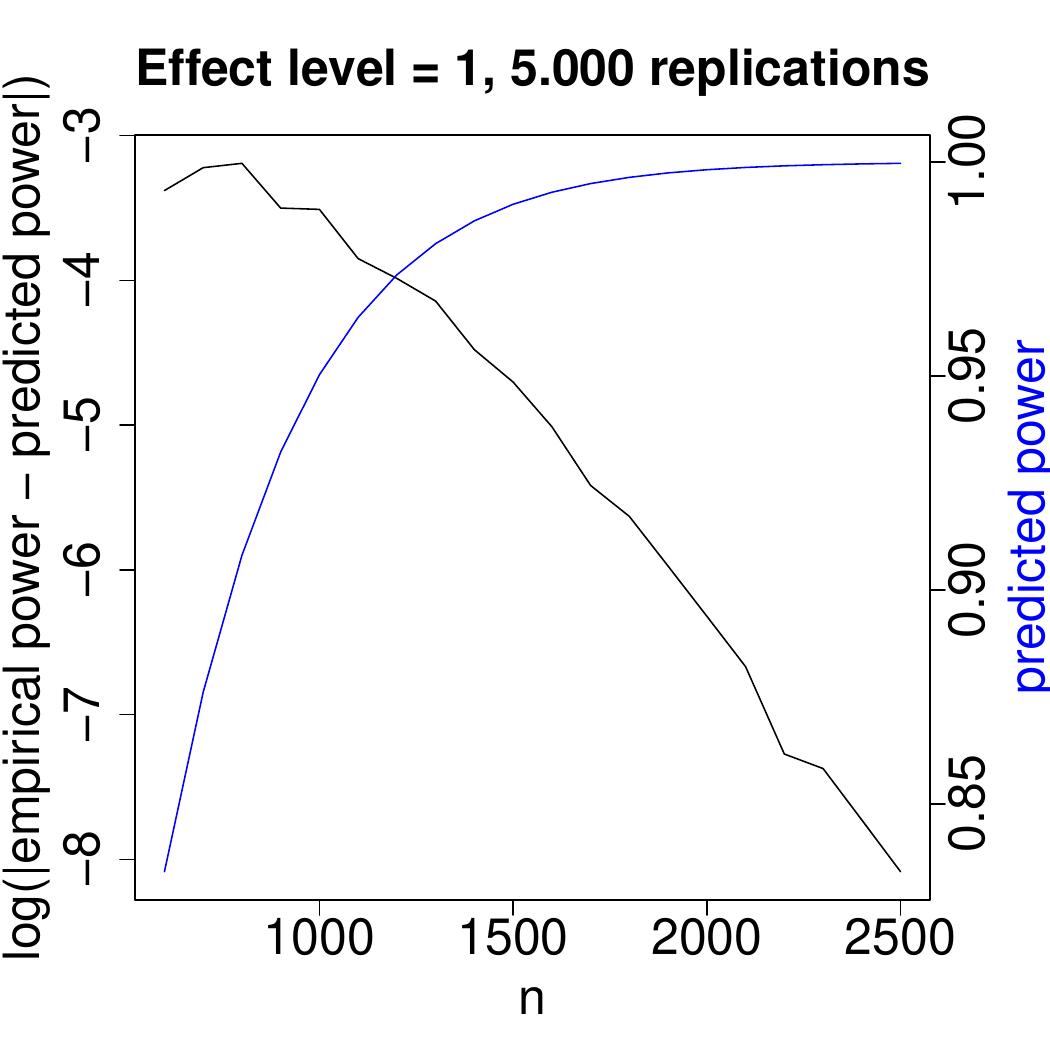}
    \caption{Example 2: Asymptotic power computations and simulations based on Corollary \ref{corollary_power}. Note that both y-axes are logarithmic.}
    \label{figure_power}
\end{figure}

\begin{example}
    Figure \ref{figure_power} shows power computations under the setup of Example \ref{exapmle_1}. The left panel visualizes the power contour lines for combinations of the significance level (y-axis) and the reference effect level of the alternative (x-axis). At the reference effect $ \lambda_{23} > 0.75 $, and significance level $ \alpha = 0.05 $, the hypothesis test's power already lies above $ 0.8 $.  In the right panel, the underlying hypothesis tests' at several sample sizes $ n $ is given in blue according to Corollary \ref{corollary_power}. The black line shows the log distance to the empirical power based on 5000 replications. Under the normality assumptions, the asymptotic approximation of the estimator $\lambda_{23} $ words well.
\end{example}

\section{Real Datasets} \label{section_applications}
\subsubsection{Observed Confounder}
An extended instrumental variable model considers an exogenous additional variable $ X_4 $ influencing $ X_{123} $.
In this scenario, the conditional covariance matrix $ Cov( X_{123} | X_4) $ is equal to the unconditional one from Equation \eqref{formula_iv_variance}.  If additional exogenous information is available, using the conditional covariance matrix through a preprocessing step allows the transfer of previously introduced theory. 

\subsection{Application -- Aggressive Driving}

Aggressive driving is a risk for drivers and positively affects the number of accidents. In this dataset, 10932 drivers were monitored while driving. Environmental variables such as lighting, temperature, humidity, and car-specific variables were recorded. The causal DAG obtained by the PC-Algorithm \cite{kalisch2007estimating} relates the variables and is displayed in Figure \ref{figure_aggressive_dag} in Appendix \ref{appendix_applications}. This DAG suggests an instrumental variable setup with $ (X_1, X_2, X_3) = ($lighting, driving style, speed$)$.  Therefore, the question of interest is whether the driving style influences the car's average speed.  Here, the direct effect is -5.6 km/h average speed per unit increase of aggressive driving style. Hence, actions to relax the driving style should be taken to reduce the average speed, preventing unnecessary accidents \cite{taylor2000effects}.  
When applying our Adaptive Sampling Method~\eqref{formula_iv_optimization_problem}, the competing estimates have the following variances: Optimized variance: 1.012. Baseline variance: 1.274. Theoretical optimized variance: 7.329. Theoretical baseline variance: 7.526. Our proposed optimized method achieves a $21\%$ smaller variance over the baseline procedure.

\subsection{ICU Data}
In the intensive care unit, the water level of patients must be controlled, and Lasix (here Furosemide) is used to drain urine from the patients. Reasons can be malfunctioning kidneys or excess water due to infusions. A preprocessed MIMIC-dataset \cite{johnson2020mimic} on 2000 patients represents an instrumental variable scenario by construction. A patient-wide, caregiver-specific average dosage is used as the instrument; the actual given amount acts as $ X_2 $ and the drained water in ml/min as $ X_3 $. When applying our Adaptive Sampling Method \eqref{formula_iv_optimization_problem}, the competing estimates have the following variances: Optimized variance: $\num{8.879e-09}$. Baseline variance: $\num{2.809e-08}$. Theoretical optimized variance: $\num{1.098e-07}$. Theoretical baseline variance: $\num{1.1038e-07}$. Our proposed optimized method achieves $68\%$ smaller variance over the baseline procedure.

\section{Conclusion}
We have proposed a framework to plan data requests with partial measurements for a given cost function.  It is a conceptually different approach to prior adaptive experimental design methods; however, it keeps the same intention- to reduce or shift the number of necessary total samples. Theorem \ref{theorem_iv_optimization_problem} gives a criterion when partial measurements are helpful and explicitly states the optimal future data request.  When there is no need for partial data, it can indicate aligned confounding and causal effect signs in the underlying model. The central idea of Optimization Problem \ref{formula_iv_optimization_problem}, combining asymptotic variances and budgeting, is not limited to instrumental variable problems and can be derived for more complex causal DAGs. This holds under the premise of identifiable target effects.  Even relaxing the Gaussian Graphical Model assumption and comparing the asymptotic normality of maximum likelihood estimators on partial datasets is feasible.

\begin{credits}
\subsubsection{\ackname} 
This work has been funded by the Federal Ministry of Education and Research (BMBF) as part of AutoDevSafeOps (01IS22087Q) ant the Bavarian Ministry for Economic Affairs, Regional Development and Energy as part of a project to support the thematic development of the Fraunhofer Institute for Cognitive Systems.

\subsubsection{\discintname}
The authors have no competing interests to declare that are relevant to the content of this article. 
\end{credits}
\bibliography{PCIC/bibliography} 
\bibliographystyle{PCIC/splncs04}

\section{Appendix A: Proofs and Derivations}

In this appendix, we collect the proofs from the main document.

\subsubsection{Proof of Theorem \ref{theorem_iv_auxillary12_variance}} \label{subsection_appendix_proof_iv_auxillary_variance}
\begin{proof}
By scaling the asymptotic of $ \hat{m}_{12;2} $ by $ 1/\sqrt{\gamma}$, we obtain
\begin{equation*}
\resizebox{\textwidth}{!}{$
    \sqrt{n_1} \left( 
    \begin{pmatrix}
    \hat{m}_{12;1} \\ \hat{m}_{13;1} \\ \hat{m}_{12;2}
    \end{pmatrix} - 
    \begin{pmatrix}
    \sigma_{12} \\ \sigma_{13} \\ \sigma_{12}
    \end{pmatrix}
    \right) 
    \overset{d}{\longrightarrow}
    N\left( 0, \begin{pmatrix}
    Var(X_1 X_2) & Cov(X_1 X_2, X_1 X_3) & 0 \\
    \cdot & Var(X_1 X_3) & 0 \\
    \cdot & \cdot &  Var(X_1 X_2) / \gamma
    \end{pmatrix}\right).$}
\end{equation*}
As $ X $ is centered, the entries of the asymptotic covariance matrix have the form $ \mathbb{E}[X_i^2 X_j^2] = \sigma_{ii} \sigma_{jj} + 2\sigma_{ij}^2$ and $ \mathbb{E}[X_i^2 X_j X_k] = \sigma_{ii}\sigma_{jk} + 2 \sigma_{ij} \sigma_{ik}$ (see \cite{isserlis1918formula}).  With
\begin{equation*}
    (\hat{m}_{12;1}, \hat{m}_{13;1}, \hat{m}_{12;2}) \mapsto (1 + \gamma)\hat{m}_{13;1} / \left( \hat{m}_{12;1} + \gamma \hat{m}_{12;2}\right) = \hat{\lambda}_{12},
\end{equation*}
the asymptotic distribution of $ \hat{\lambda}_{12} $ must be, by Cram\'er's Theorem (see, e.g., \cite{Ferguson2017}, Thm. 7), Gaussian and its variance can be computed with basic algebra as
\resizebox{\textwidth}{!}{\parbox{\textwidth}{
\begin{align*}
    & (\frac{-\sigma_{13}}{(1 + \gamma) \sigma_{12}^2}, \sigma_{12}^{-1}, \frac{-\gamma \sigma_{13}}{(1 + \gamma) \sigma_{12}^2})
    \begin{pmatrix}
    Var(X_1 X_2) & Cov(X_1 X_2, X_1 X_3) & 0 \\
    \cdot & Var(X_1 X_3) & 0 \\
    \cdot & \cdot &  Var(X_1 X_2) / \gamma
    \end{pmatrix}
    \begin{pmatrix}
    \frac{-\sigma_{13}}{(1 + \gamma) \sigma_{12}^2} \\ \sigma_{12}^{-1} \\ \frac{-\gamma \sigma_{13}}{(1 + \gamma) \sigma_{12}^2}
    \end{pmatrix} \\
    & \quad = \frac{\sigma_{13}^2}{(1 + \gamma)^2 \sigma_{12}^4} \mathbb{E}[(X_1 X_2)^2] 
    - 2 \frac{\sigma_{13}}{(1 + \gamma) \sigma_{12}^3} \mathrm{Cov}(X_1 X_2, X_1 X_3)  \\ & \qquad
    + \frac{1}{\sigma_{12}^2} \mathbb{E}[X_1 X_3]
    + \frac{\gamma \sigma_{13}^2}{(1 + \gamma)^2 \sigma_{12}^4} \mathbb{E}[(X_1 X_2)^2]  \\
    & \quad = \frac{\sigma_{13}^2}{(1 + \gamma)^2 \sigma_{12}^4} \left( \sigma_{11} \sigma_{22} + 2 \sigma_{12}^2 \right)
    - 2 \frac{\sigma_{13}}{(1 + \gamma) \sigma_{12}^3} \left( \sigma_{11}\sigma_{23} +  \sigma_{12}\sigma_{13} \right) \\ & \qquad
    + \frac{1}{\sigma_{12}^2} \left( \sigma_{11} \sigma_{33} + 2 \sigma_{13}^2 \right)
    + \frac{\gamma \sigma_{13}^2}{(1 + \gamma)^2 \sigma_{12}^4} \left( \sigma_{11} \sigma_{22} + 2 \sigma_{12}^2 \right) \\
    & \quad = \frac{1}{\sigma_{12}^2} \left( \frac{\sigma_{13}^2}{(1 + \gamma)  \sigma_{12}^2} \left( \sigma_{11} \sigma_{22} + 2 \sigma_{12}^2 \right)
    - 2 \frac{\sigma_{13}}{ (1 + \gamma)  \sigma_{12}} \left( \sigma_{11}\sigma_{23} +  \sigma_{12}\sigma_{13} \right) 
    + \left( \sigma_{11} \sigma_{33} + 2 \sigma_{13}^2 \right) \right)\\
    & \quad = \frac{1}{\sigma_{12}^2} \left( \frac{\sigma_{13}^2 \sigma_{11} \sigma_{22}}{(1 + \gamma) \sigma_{12}^2} 
    - 2 \frac{\sigma_{13} \sigma_{11}\sigma_{23}}{(1 + \gamma) \sigma_{12}} +  \sigma_{11} \sigma_{33} + 2 \sigma_{13}^2 \right).
\end{align*}}}
This representation of the variance as a function of $ \Sigma $ can now be expressed in the original parameterization of $ \Lambda $ and $ \Omega $. For this, we use Equation \eqref{formula_iv_variance} and find with basic algebra

\noindent
\resizebox{\textwidth}{!}{\parbox{\textwidth}{
\begin{align*}
    & \frac{1}{\omega_{11}^2\lambda_{12}^2} \left( \frac{\omega_{11}^2 \lambda_{12}^2\lambda_{23}^2 \omega_{11} (\omega_{22} + \omega_{11}\lambda_{12}^2)}{(1 + \gamma) \omega_{11}^2\lambda_{12}^2} 
    - 2 \frac{\omega_{11} \lambda_{12}\lambda_{23} \omega_{11} (\omega_{23} + \lambda_{23} (\omega_{22} + \omega_{11}\lambda_{12}^2))}{(1 + \gamma) \omega_{11}\lambda_{12}} \right.  \\ & \qquad \left.
    +  \omega_{11} (\omega_{33} + 2\omega_{23}\lambda_{23} + \lambda_{23}^2(\omega_{22} + \omega_{11}\lambda_{12}^2)) + 2 \omega_{11}^2 \lambda_{12}^2\lambda_{23}^2 \right) \\
    & \quad = \frac{1}{\omega_{11}\lambda_{12}^2} \left( \frac{\lambda_{23}^2 \omega_{22} + \lambda_{23}^2 \omega_{11}\lambda_{12}^2}{(1 + \gamma)} 
    - 2 \frac{\lambda_{23} \omega_{23} + \lambda_{23}^2 \omega_{22} + \lambda_{23}^2\omega_{11}\lambda_{12}^2)}{(1 + \gamma)} \right.  \\ & \qquad \left.
    +  \omega_{33} + 2\omega_{23}\lambda_{23} + \lambda_{23}^2\omega_{22}  + 3 \omega_{11} \lambda_{12}^2\lambda_{23}^2 \right)\\
    & \quad = \frac{1}{\omega_{11}\lambda_{12}^2} \left( \frac{\gamma}{(1 + \gamma)}\lambda_{23}^2 \omega_{22} + \frac{(2 + 3 \gamma)}{(1 + \gamma)} \lambda_{23}^2 \omega_{11}\lambda_{12}^2 + \frac{2\gamma}{(1 + \gamma)} \lambda_{23} \omega_{23} +  \omega_{33}  \right)
    \end{align*}}}
\end{proof}

\subsubsection{Proof of Corollary \ref{corollary_iv_minimizing_alpha}}
\begin{proof}
Taking the derivative gives
\begin{align*}
    & v^\prime(\gamma) =\frac{1}{(1 + \gamma )^2} \frac{1}{\omega_{11}\lambda_{12}^2} \left( \lambda_{23}^2 \omega_{22} + \lambda_{23}^2 \omega_{11}\lambda_{12}^2 + 2\lambda_{23} \omega_{23}  \right) < 0 \\
    & \iff \omega_{22} +  \omega_{11}\lambda_{12}^2 + 2 \omega_{23} / \lambda_{23} < 0 \\
    & \iff \lambda_{12}^2 < - \frac{\omega_{22} + 2 \omega_{23} / \lambda_{23}}{\omega_{11}}.
\end{align*}
\end{proof}

\subsubsection{Proof of Theorem \ref{theorem_iv_optimization_problem}}
We focus on the Optimization Problem \eqref{formula_iv_optimization_problem} with the continuous constraint $ (m_1, m_2) \in [0, \infty)^2 $, so 
\begin{equation} 
    \min_{(m_1, m_2) \in [0, \infty)^2} \frac{v(\frac{m_2}{n + m_1})}{n + m_1}  \quad \text{, s.t.} \quad  c_1 m_1 + c_2 m_2 \leq b.
\end{equation}
\begin{proof}
For a fixed ratio $  m_2 / (n + m_1) $, more samples will always decrease the objective as $ v( m_2 / (n + m_1)) $ is constant and the denominator increases. Therefore, only the boundary has to be checked, and we can require $ c_1 m_1 + c_2 m_2 = b $.  With the relation $ m_2 = (b - c_1 m_1) / c_2$, the problem 
\begin{equation} \label{formula_iv_optimization_problem_versionII}
    \min_{m \in [0, b / c_1]} \frac{v(\frac{1}{c_2}\left( \frac{b - c_1 m}{n + m} \right))}{n + m} \eqqcolon h(c_1, c_2, b, n, m),
\end{equation}
can be solved equivalently.  The solution to the original problem is then given by $ (m^\ast, (b - c_1 m^\ast) / c_2) $. We can equate the derivative $ \partial  h(c_1, c_2, b, n, m) / \partial m $ to $ 0 $ and find
\begin{align*}
    &\frac{v^\prime(\frac{1}{c_2}\left( \frac{b - c_1 m}{n + m} \right)) \frac{1}{c_2} \frac{-c_1 (n + m) - (b - c_1 m)}{(n + m)^2} (n + m) - v(\frac{1}{c_2}\left( \frac{b - c_1 m}{n + m} \right))}{(n + m)^2} \overset{!}{=} 0 \\
    &\Leftrightarrow v^\prime(\xi) \left( \frac{c_1}{c_2} + \xi \right) + v(\xi) = 0 \quad \text{, where } \quad \xi = \frac{1}{c_2}\left( \frac{b - c_1 m}{n + m} \right).
\end{align*}

Let's define $ \chi_1 \coloneqq (c_{1} - c_{2}) ( \sigma_{11} \sigma_{13}^{2} \sigma_{22} - 2 \, \sigma_{11} \sigma_{12} \sigma_{13} \sigma_{23}) $ and $ \chi_2 \coloneqq 2 \, \sigma_{12}^{2} \sigma_{13}^{2} c_{2} + \sigma_{11} \sigma_{12}^{2} c_{2} \sigma_{33}$. Then, the necessary equation simplifies to 

\begin{equation*}
    v^\prime(\xi) \left( \frac{c_1}{c_2} + \xi \right) + v(\xi) = \frac{\chi_2 \xi^{2}}{\sigma_{12}^{4} c_{2} (1 + \xi)^{2}} + \frac{ 2 \, \chi_2 \xi}{\sigma_{12}^{4} c_{2} (1 + \xi)^{2}} + \frac{\chi_2 - \chi_1}{\sigma_{12}^{4} c_{2} (1 + \xi)^{2}}
\end{equation*} with solutions $ -1 \pm \sqrt{ \chi_1 \chi_2} / \chi_2 $. So $ \xi > 0 $, which is necessary for $m^\ast > 0 $, is true if and only if $ [(c_{1} - c_{2}) / c_2] ( \sigma_{13}^{2} \sigma_{22} - 2 \,\sigma_{12} \sigma_{13} \sigma_{23}) > 2 \, \sigma_{12}^{2} \sigma_{13}^{2}  / \sigma_{11} +  \sigma_{12}^{2}  \sigma_{33} $, or equivalently $ (c_1 / c_2) (-2 \lambda_{23}\omega_{23} -  \lambda_{23}^2 \omega_{22} - \lambda_{23}^2 \omega_{11}\lambda_{12}^2)) > 2 \,  \omega_{11} \lambda_{12}^2\lambda_{23}^2 + \omega_{33}$.
\end{proof}

\subsubsection{Proof of Lemma \ref{lemma_significance_level}}
\begin{proof}
    Let $ \mu \leq 0 $ be an arbitrary parameter in the null hypothesis for the model parameter $ \lambda_{23} $.  Following the recommendation of the Optimization Problem~\eqref{formula_iv_optimization_problem}, we have that $ \hat{\lambda}_{23} $ is asymptotically $ N\left( \mu, v(\gamma; \mu) / ( n + m_1)\right) $ distributed.  Thus $ \lim_{n \to \infty} P_{\lambda_{23} = \mu}(\hat{\lambda}_{23} > F^{-1}_{N(\mu, v(\gamma; \mu) / (n + m_1))}(1 - \alpha)) = \alpha $ and the asymptotic significance level is given by
    \begin{align*}
        \lim_{n \to \infty}P_{H_0}(\mathrm{reject } \; H_0) 
        &= \lim_{n \to \infty} \sup_{\mu \leq 0} P_{\lambda_{23} = \mu}\left(\hat{\lambda}_{23} > \tau \right) \\ 
        &\leq \sup_{\mu \leq 0} \lim_{n \to \infty} P_{\lambda_{23} = \mu}\left(\hat{\lambda}_{23} > F^{-1}_{N(\mu, v(\gamma; \mu) / (n + m_1))}(1 - \alpha)\right) =  \alpha.
    \end{align*}
\end{proof}

\section{Appendix B: Applications -- Further Details} \label{appendix_applications}
\subsubsection{Aggressive Driving}
\begin{figure}[h] \label{figure_aggressive_dag}
    \centering
    \includegraphics[width = 10cm]{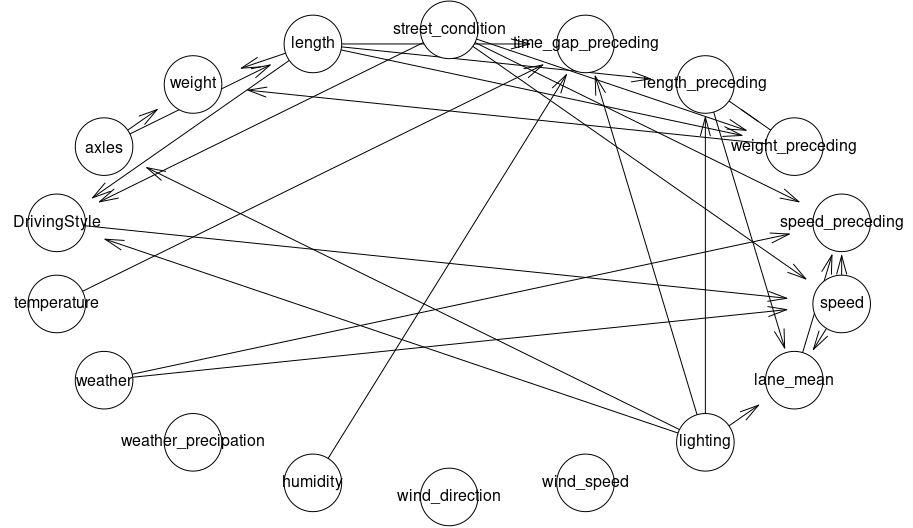}
    \caption{This DAG on the Aggressive Driving was estimated by the PC-Algorithm.}
\end{figure}
All measured variables from the publicly available aggressive driving dataset\footnote{\url{https://www.kaggle.com/datasets/veeralakrishna/aggressive-driving-data}} are related in the learned DAG in Figure \ref{figure_aggressive_dag}.  The measurement 'driving style' has three levels: \textit{aggressive}, \textit{normal} and \textit{relaxed} driving. For the adaptive data collection, we assumed $ 500 $ initial data samples on $ X_{123} $, an additional budget of $ 300 $, and a cost per sample of $ 1 $ for $ X_{123} $ and $ 0.3 $ for $ X_{12} $.   The optimized allocation is $ m = (211, 293) $.  The variances were estimated using subsampling on the dataset. The estimated covariance matrix over the full dataset has the following entries:
\begin{equation*}
\begin{matrix}
    \text{Lighting} \\ \text{Driving Style} \\ \text{Speed}
\end{matrix}
    \begin{pmatrix}
  \num{7.436e-01} &  \num{6.806e-02} &  \num{2.225e-01}     \\
6.806 \times 10^{-2} &  \num{5.112e-01} & \num{-8.365e+00}    \\
 \num{2.225e-01} & \num{-8.365e+00} & \num{3.531e+01}  
\end{pmatrix}
\end{equation*} 

\subsubsection{ICU Medication}
20442 patients were treated in the MIMIC-IV ICU database with Lasix, where foley output (urine) was measured during the subsequent 6 hours.  To eliminate dependencies between samples, we selected every patient’s first ICU stay. As the response variable $X_3$, we selected the \textit{urine}/\textit{minutes passed} ratio of the last measurement within these 6 hours. We selected 2000 patients at random and used their caregiver ID to generate an instrumental caregiver attribute over the remaining 18442 patients. This was the average prescribed amount of the caregiver in the secondary dataset.  The estimated covariance matrix over the full dataset has the following entries:
\begin{equation*}
\begin{matrix}
    \text{Caregiver Rate} \\ \text{Medication} \\ \text{Foley Rate}
\end{matrix}
    \begin{pmatrix}
      \num{1.345e+04} & \num{7.921e+03} & \num{0.2424} \\
         \num{7.921e+03} & \num{1.125e+05} &  \num{1.038} \\
    \num{2.424e-01} & \num{1.037} &  \num{0.1082} 
\end{pmatrix}
\end{equation*} 
For the adaptive data collection, we assumed 200 initial data samples on $X_{123}$, an additional budget of 100, and a cost per sample of 1 for $X_{123}$ and 1/7 for $X_{12}$. The optimized allocation is $ m = (88, 77) $ with an effect size of $ \num{3.064e-5}$.  The variances were estimated using subsampling on the remaining dataset.

\subsubsection{Assembly Line}
In the previous two scenarios, our method advised gathering more partial data. A real-life covariance matrix where this is not the case is       
\begin{equation*}
    \begin{pmatrix}
  \num{7.278e-10} & \num{-1.451e-07} & \num{-9.014e-08} \\
\num{-1.450e-07} & \num{ 2.429e+00} & \num{-4.563e-01} \\
 \num{-9.014e-08} & \num{-4.564e-01} &\num{ 7.330e-01}
\end{pmatrix}.
\end{equation*}
The underlying dataset is presented in \cite{gobler2024textttcausalassembly} and combines 15581 data points and expert knowledge from an assembly line. Thus, we can ensure that the instrumental variable assumptions are satisfied.

\end{document}